\newcommand\x{{\bf x}}
\newcommand\bff{{\bf f}}
\newcommand\bfg{{\bf g}}
\newcommand\bfh{{\bf h}}
\newcommand\zero{{\bf 0}}
\newtheorem{theorem}{Theorem}[section]
\newtheorem{example}[theorem]{Example}
\newtheorem{corollary}[theorem]{Corollary}
\title{A Blackbox Polynomial System Solver \\
on Parallel Shared Memory Computers\thanks{This material is based upon work 
supported by the National Science Foundation under Grant No. 1440534.}}
\author{Jan Verschelde \\
University of Illinois at Chicago \\
Department of Mathematics, Statistics, and Computer Science \\
851 S. Morgan Street (m/c 249), Chicago, IL 60607-7045, USA \\
{\tt janv@uic.edu, http://www.math.uic.edu/$\sim$jan}}
\begin{document}

\maketitle

\begin{abstract}
A numerical irreducible decomposition for a polynomial system
provides representations for the irreducible factors of all 
positive dimensional solution sets of the system,
separated from its isolated solutions.
Homotopy continuation methods are applied
to compute a numerical irreducible decomposition.
Load balancing and pipelining are techniques in
a parallel implementation on a computer with multicore processors.
The application of the parallel algorithms is illustrated
on solving the cyclic $n$-roots problems,
in particular for $n = 8, 9$, and~12.

\noindent {\bf Keywords and phrases.}
Homotopy continuation,
numerical irreducible decomposition,
mathematical software, multitasking,
pipelining, polyhedral homotopies, polynomial system,
shared memory parallel computing.

\end{abstract}

\section{Introduction}

Almost all computers have multicore processors enabling
the simultaneous execution of instructions in an algorithm.
The algorithms considered in this paper are applied to solve
a polynomial system.  Parallel algorithms can often
deliver significant speedups on computers with multicore processors.

A blackbox solver implies a fixed selection of algorithms,
run with default settings of options and tolerances.
The selected methods are homotopy continuation methods to compute
a numerical irreducible decomposition of the solution set of
a polynomial system.  
As the solution paths defined by a polynomial homotopy can be 
tracked independently from each other, there is no communication
and no synchronization overhead.
Therefore, one may hope that with $p$ threads,
the speedup will be close to~$p$.

The number of paths that needs to be tracked to compute a numerical
irreducible decomposition can be a multiple of the number of paths
defined by a homotopy to approximate all isolated solutions.
Nevertheless, in order to properly distinguish the isolated singular
solutions (which occur with multiplicity two or higher) from the
solutions on positive dimensional solutions, one needs a representation
for the positive dimensional solution sets.

On parallel shared memory computers, the work crew model is applied.
In this model, threads are collaborating to complete a queue of jobs.
The pointer to the next job in the queue is guarded by a semaphore
so only one thread can access the next job and move the pointer to
the next job forwards.
The design of multithreaded software is described in~\cite{San11}.

The development of the blackbox solver was targeted at
the cyclic $n$-roots systems.  Backelin's Lemma~\cite{Bac89}
states that, if $n$ has a quadratic divisor, then there
are infinitely many cyclic $n$-roots.
Interesting values for $n$ are thus 8, 9, and 12,
respectively considered in~\cite{BF94}, \cite{Fau01}, and~\cite{Sab11}.

\noindent {\bf Problem Statement.}
The top down computation of a numerical irreducible decomposition
requires first the solving of a system augmented with as many
general linear equations as the expected top dimension of the
solution set.  This first stage is then followed by a cascade of
homotopies to compute candidate generic points on lower dimensional
solution sets.  In the third stage, the output of the cascades is
filtered and generic points are classified along their
irreducible components.  In the application of the work crew model
with $p$ threads, the problem is to study if the speedup will
converge to $p$, asymptotically for sufficiently large problems.
Another interesting question concerns {\em quality up}:
if we can afford the same computational time as on one thread,
then by how much can we improve the quality of the computed results
with $p$ threads?

\noindent {\bf Prior Work.}
The software used in this paper is PHCpack~\cite{Ver99},
which provides a numerical irreducible decomposition~\cite{SVW03}.
For the mixed volume computation,
MixedVol~\cite{GLW05} and DEMiCs~\cite{MT08} are used.
An introduction to the homotopy continuation methods for
computing positive dimensional solution sets is described in~\cite{SVW05}.
The overhead of double double and quad double precision~\cite{HLB00}
in path trackers can be compensated on multicore workstations
by parallel algorithms~\cite{VY10}.
The factorization of a pure dimensional solution set
on a distributed memory computer with message passing
was described in~\cite{LV09}.

\noindent {\bf Related Work.}
A numerical irreducible decomposition can be computed by a program
described in~\cite{BHSW08}, but that program lacks 
polyhedral homotopies, needed to efficiently solve sparse 
polynomial systems such as the cyclic $n$-roots problems.
Parallel algorithms for mixed volumes and polyhedral homotopies
were presented in~\cite{CLL14b,CLL14a}.
The computation of the positive dimensional solutions
for the cyclic 12-roots problem was reported first in~\cite{Sab11}.
A recent parallel implementation of polyhedral homotopies
was announced in~\cite{Mal15}.

\noindent {\bf Contributions and Organization.}
The next section proposes the application of pipelining to
interleave the computation of mixed cells with the tracking of
solution paths to solve a random coefficient system.
The production rate of mixed cells relative to the cost of
path tracking is related to the pipeline latency.
The third section describes the second stage in the solver
and examines the speedup for tracking paths defined by
sequences of homotopies.  In section four, the speedup of
the application of the homotopy membership test is defined.
One outcome of this research is free and open software
to compute a numerical irreducible decomposition on parallel
shared memory computers.  Computational experiments with
the software are presented in section five.

\section{Solving the Top Dimensional System}

There is only one input to the blackbox solver:
the expected top dimension of the solution set.
This input may be replaced by the number of variables minus one.
However, entering an expected top dimension that is too high
may lead to a significant computational overhead.

\subsection{Random Hyperplanes and Slack Variables}

A system is called {\em square} if it has as many equations
as unknowns.  A system is {\em underdetermined} if it has 
fewer equations than unknowns.  An underdetermined system 
can be turned into square system by adding as many linear
equations with randomly generated complex coefficients as
the difference between the number of unknowns and equations.
A system is {\em overdetermined} if there are more equations
than unknowns.  To turn an overdetermined system into a square one,
add repeatedly to every equation in the overdetermined system 
a random complex constant multiplied by a new slack variable, repeatedly
until the total number of variables equals the number of equations.

The top dimensional system is the given polynomial system,
augmented with as many linear equations with randomly generated
complex coefficients as the expected top dimension.
To the augmented system as many slack variables are added as
the expected top dimension.  The result of adding random linear
equations and slack variables is called an {\em embedded} system.
Solutions of the embedded system with zero slack variables are
generic points on the top dimensional solution set.
Solutions of the embedded system with nonzero slack variables
are start solutions in cascades of homotopies to compute
generic points on lower dimensional solution sets.

\begin{example} (embedding a system) \label{exembedding}
{\rm The equations for the cyclic 4-roots problem are
\begin{equation} \label{eqcyclic4}
   \bff(\x) =
   \left\{
      \begin{array}{c}
         x_1 + x_2 + x_3 + x_4 = 0 \\
         x_1 x_2 + x_2 x_3 + x_3 x_4 + x_4 x_1 = 0 \\
         x_1 x_2 x_3 + x_2 x_3 x_4 + x_3 x_4 x_1 + x_4 x_1 x_2 = 0 \\
         x_1 x_2 x_3 x_4 - 1 = 0.
      \end{array}
   \right.
\end{equation}
The expected top dimension equals one.
The system is augmented by one linear equation
and one slack variable~$z_1$.
The embedded system is then the following:
\begin{equation} \label{eqembeddedcyclic4}
   E_1(\bff(\x),z_1) =
   \left\{
      \begin{array}{rcl}
         x_1 + x_2 + x_3 + x_4 + \gamma_1 z_1 & = & 0 \\
         x_1 x_2 + x_2 x_3 + x_3 x_4 + x_4 x_1 + \gamma_2 z_1 & = & 0 \\
         x_1 x_2 x_3 + x_2 x_3 x_4 + x_3 x_4 x_1 + x_4 x_1 x_2
       + \gamma_3 z_1 & = & 0 \\
         x_1 x_2 x_3 x_4 - 1 + \gamma_4 z_1 & = & 0 \\
         c_0 + c_1 x_1 + c_2 x_2 + c_3 x_3 + c_4 x_4 + z_1 & = & 0.
      \end{array}
   \right.
\end{equation}
The constants $\gamma_1$, $\gamma_2$, $\gamma_3$, $\gamma_4$
and $c_0$, $c_1$, $c_2$, $c_3$, $c_4$ are randomly generated
complex numbers. 

The system $E_1(\bff(\x),z_1) = \zero$ has 20 solutions.
Four of those 20 solutions have a zero value for the slack variable~$z_1$.
Those four solutions satisfy thus the system
\begin{equation} \label{eqaugmentedcyclic4}
   E_1(\bff(\x),0) =
   \left\{
      \begin{array}{c}
         x_1 + x_2 + x_3 + x_4 = 0 \\
         x_1 x_2 + x_2 x_3 + x_3 x_4 + x_4 x_1 = 0 \\
         x_1 x_2 x_3 + x_2 x_3 x_4 + x_3 x_4 x_1 + x_4 x_1 x_2 = 0 \\
         x_1 x_2 x_3 x_4 - 1 = 0 \\
         c_0 + c_1 x_1 + c_2 x_2 + c_3 x_3 + c_4 x_4 = 0.
      \end{array}
   \right.
\end{equation}
By the random choice of the constants $c_0$, $c_1$, $c_2$,
$c_3$, and $c_4$, the four solutions are generic points
on the one dimensional solution set.
Four equals the degree of the one dimensional solution set
of the cyclic 4-roots problem.  }
\end{example}

For systems with sufficiently general coefficients,
polyhedral homotopies are generically optimal in the sense
that no solution path diverges.
Therefore, the default choice to solve the top dimensional system
is the computation of a mixed cell configuration and the solving of
a random coefficient start system.
Tracking the paths to solve the random coefficient start system is
a pleasingly parallel computation, which with dynamic load balancing
will lead to a close to optimal speedup.

\subsection{Pipelined Polyhedral Homotopies}

The computation of all mixed cells is harder to run in parallel,
but fortunately the mixed volume computation takes in general less time
than the tracking of all solution paths and, more importantly, the
mixed cells are not obtained all at once at the end,
but are produced in sequence, one after the other.
As soon as a cell is available, the tracking of as many solution paths
as the volume of the cell can start.
Figure~\ref{fig2stagepipe} illustrates a 2-stage pipeline
with $p$ threads.

\begin{figure}[hbt]
\begin{center}
\begin{picture}(220,55)(0,0)
\put(80,32){\line(1,0){15}}
\put(80,20){\line(1,0){15}}   \put(83,23){$P_0$}
\put(80,20){\line(0,1){12}}
\put(95,20){\line(0,1){12}}
\put(95,26){\vector(1,0){35}}
\put(130,0){\line(1,0){25}}  \put(133,3){$P_{p-1}$}
\put(130,12){\line(1,0){25}} \put(142,17){$\vdots$}
\put(130,31){\line(1,0){25}} \put(138,34){$P_2$}
\put(130,43){\line(1,0){25}} \put(138,46){$P_1$}
\put(130,55){\line(1,0){25}}
\put(130,0){\line(0,1){55}}
\put(155,0){\line(0,1){55}}
\put(45,26){\vector(1,0){35}}   \put(3,24){$\bfg(\x) = \zero$}
\put(155,26){\vector(1,0){35}}  \put(195,24){$\bfg^{-1}(\zero)$}
\end{picture}
\caption{A 2-stage pipeline with thread $P_0$ in the first stage
to compute the cells to solve the start systems with paths to be
tracked in the second stage by $p-1$ threads $P_1$, $P_2$,
$\ldots$, $P_{p-1}$.
The input to the pipeline is a random coefficient system $\bfg(\x) = \zero$
and the output are its solutions in the set $\bfg^{-1}(\zero)$.  }
\label{fig2stagepipe}
\end{center}
\end{figure}
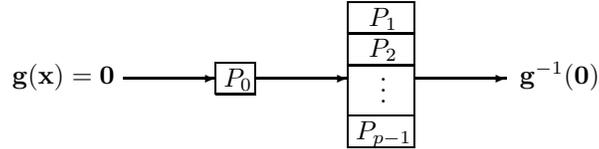

Figure~\ref{figpipediagram} illustrates the application of pipelining
to the solving of a random coefficient system where the subdivision
of the Newton polytopes has six cells.  The six cells are computed by
the first thread.  The other three threads take the cells and
run polyhedral homotopies to compute as many solutions as the volume
of the corresponding cell.

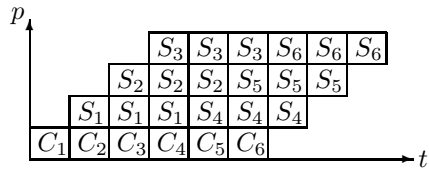
\begin{figure}[hbt]
\begin{center}
\begin{picture}(150,55)(0,0)
\put(0,0){\vector(0,1){53}}  \put(-7,53){$p$}
\put(0,0){\vector(1,0){145}} \put(147,-3){$t$}
\put(3,3){$C_1$}
\put(18,3){$C_2$}  \put(18,15){$S_1$}
\put(33,3){$C_3$}  \put(33,15){$S_1$}  \put(33,27){$S_2$}
\put(48,3){$C_4$}  \put(48,15){$S_1$}  \put(48,27){$S_2$}  \put(48,39){$S_3$}
\put(63,3){$C_5$}  \put(63,15){$S_4$}  \put(63,27){$S_2$}  \put(63,39){$S_3$}
\put(78,3){$C_6$}  \put(78,15){$S_4$}  \put(78,27){$S_5$}  \put(78,39){$S_3$}
                   \put(93,15){$S_4$}  \put(93,27){$S_5$}  \put(93,39){$S_6$}
                                      \put(108,27){$S_5$} \put(108,39){$S_6$}
                                                          \put(123,39){$S_6$}
\put(0,12){\line(1,0){105}}
\put(15,24){\line(1,0){105}}
\put(30,36){\line(1,0){105}}
\put(45,48){\line(1,0){90}}
\put(15,0){\line(0,1){24}}
\put(30,0){\line(0,1){36}}
\put(45,0){\line(0,1){48}}
\put(60,0){\line(0,1){48}}
\put(75,0){\line(0,1){48}}
\put(90,0){\line(0,1){48}}
\put(105,12){\line(0,1){36}}
\put(120,24){\line(0,1){24}}
\put(135,36){\line(0,1){12}}
\end{picture}
\caption{A space time diagram for a 2-stage pipeline with one thread
to produce 6 cells $C_1$, $C_2$, $\ldots$, $C_6$ and 3 threads 
to solve the corresponding 6 start systems $S_1$, $S_2$, $\ldots$, $S_6$.
For regularity, it is assumed that solving one start system takes three 
times as many time units as it takes to produce one cell. }
\label{figpipediagram}
\end{center}
\end{figure}

Counting the horizontal span of time units in Figure~\ref{figpipediagram},
the total time equals 9 units.  In the corresponding sequential process,
it takes 24 time units.  This particular pipeline with 4 threads
gives a speedup of $24/9 \approx 2.67$.

\subsection{Speedup}

As in Figure~\ref{fig2stagepipe},
consider a scenario with $p$ threads:
\begin{itemize}
\item the first thread produces $n$ cells; and
\item the other $p-1$ threads track all paths corresponding to the cells.
\end{itemize}
Assume that tracking all paths for one cell costs $F$ times the amount
of time it takes to produce that one cell.
In this scenario, the sequential time $T_1$,
the parallel time $T_p$, and the speedup $S_p$ are defined
by the following formulas:
\begin{equation} \label{eqpipespeedup}
  T_1 = n + Fn, \quad T_p = p-1 + \frac{Fn}{p-1}, \quad
  S_p = \frac{T_1}{T_p} = \frac{n (1+F)}{p-1 + \frac{Fn}{p-1}}.
\end{equation}
The term $p-1$ in $T_p$ is {\em the pipeline latency},
the time it takes to fill up the pipeline with jobs.
After this latency, the pipeline works at full speed.

The formula for the speedup $S_p$ in~(\ref{eqpipespeedup})
is rather too complicated for direct interpretation.
Let us consider a special case.
For large problems, the number $n$ of cells is larger
than the number $p$ of threads, $n \gg p$.
For a fixed number $p$ of threads, let $n$ approach infinity.
Then an optimal speedup is achieved, if the pipeline latency $p-1$
equals the multiplier factor $F$ in the tracking of all paths
relative to the time to produce one cell.
This observation is formalized in the following theorem.

\begin{theorem} \label{theopipe}
If $F=p-1$, then $S_p = p$ for $n \rightarrow \infty$.
\end{theorem}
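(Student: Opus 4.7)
The plan is a direct algebraic manipulation of the speedup formula~(\ref{eqpipespeedup}), followed by taking the limit as $n \rightarrow \infty$. Since the formula $S_p = n(1+F)/(p-1 + Fn/(p-1))$ has already been derived in the body of the paper, the theorem reduces to a calculus exercise, so no new machinery is needed.

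First I would substitute $F = p-1$ directly into the speedup expression. In the numerator, $1+F$ becomes $p$, giving $np$. In the denominator, the term $Fn/(p-1)$ collapses to $n$, so the denominator simplifies to $p-1+n$. Thus the speedup under the hypothesis reduces to the clean expression $S_p = np/(n+p-1)$.

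Next I would take the limit $n \rightarrow \infty$ for fixed $p$. Dividing numerator and denominator by $n$ gives $S_p = p/(1 + (p-1)/n)$, which tends to $p$ as $n \rightarrow \infty$. This is the claimed limit. I would also note that the convergence is monotone from below, so $S_p < p$ for every finite $n$, which reinforces the interpretation of $p-1$ as the (irreducible) pipeline latency: increasing $n$ only amortizes but never eliminates the cost of filling the pipeline.

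There is no real obstacle here; the only subtlety worth flagging is the implicit assumption that $p \geq 2$ (so that the second stage has at least one thread and $F = p-1 \geq 1$ is meaningful) and that $n$ is large enough that the work distribution $Fn/(p-1)$ dominates the latency, which is precisely the regime $n \gg p$ discussed just before the theorem statement.
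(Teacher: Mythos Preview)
Your proposal is correct and matches the paper's own proof essentially line for line: substitute $F=p-1$ to obtain $S_p = np/(n+p-1)$, then let $n\to\infty$ to conclude $S_p\to p$. The extra remarks on monotone convergence from below and the implicit assumption $p\ge 2$ are sound additions but not part of the paper's argument.
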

\begin{proof}
For $F = p-1$, $T_1 = np$ and $T_p = n + p - 1$.  Then,
letting $n \rightarrow \infty$,
\begin{equation}
   \lim_{n \rightarrow \infty} S_p
   = \lim_{n \rightarrow \infty} \frac{T_1}{T_p}
   = \lim_{n \rightarrow \infty} \frac{np}{n+p-1} = p. \quad \qed
\end{equation}
\end{proof}

In case the multiplier factor is larger than the pipeline latency,
if $F > p-1$, then the first thread will finish sooner
with its production of cells and remains idle for some time.
If $p \gg 1$, then having one thread out of many idle is not bad.
The other case, if tracking all paths for one cell is smaller
than the pipeline latency, if $F < p-1$, is worse
as many threads will be idle waiting for cells to process.

The above analysis applies to pipelined polyhedral homotopies
to solve a random coefficient system.
Consider the solving of the top dimensional system.

\begin{corollary}
Let $F$ be the multiplier factor in the cost of tracking the paths
to solve the start system, relative to the cost of computing the cells.
If the pipeline latency equals $F$,
then the speedup to solve the top dimensional system
with $p$ threads will asymptotically converge to $p$,
as the number of cells goes to infinity.
\end{corollary}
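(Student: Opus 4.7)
The plan is to observe that this corollary is essentially a direct instance of Theorem~\ref{theopipe} applied to the top dimensional system. The only substantive work is to check that the pipeline model of Figure~\ref{fig2stagepipe} really applies to the task described, and then to match notation so that the hypothesis ``pipeline latency equals $F$'' becomes the hypothesis ``$F = p-1$'' of the theorem.

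First I would recall from Section~2.1 that the top dimensional system is, by construction, a square polynomial system: the given system augmented with as many slack variables and random linear equations as the expected top dimension. As remarked immediately before Theorem~\ref{theopipe}, polyhedral homotopies are generically optimal for such systems with sufficiently general coefficients, so solving the top dimensional system via a random coefficient start system fits exactly the 2-stage pipeline of Figure~\ref{fig2stagepipe}. One thread produces the $n$ mixed cells sequentially, and the remaining $p-1$ threads consume those cells as they appear to track the polyhedral homotopy paths of the corresponding start systems. Hence the timing analysis leading to equation~(\ref{eqpipespeedup}) is valid verbatim in this situation.

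Second I would appeal to that analysis: in equation~(\ref{eqpipespeedup}) the additive term $p-1$ in $T_p$ is identified as the pipeline latency, and $F$ is the multiplier factor for path tracking relative to cell production, which is precisely the $F$ appearing in the statement of the corollary. The assumption that the pipeline latency equals $F$ therefore reads $F = p-1$, and Theorem~\ref{theopipe} then yields $S_p \to p$ as $n \to \infty$.

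The main (and really only) obstacle is the bookkeeping: making the identification of time units consistent, so that ``cost of computing the cells'' in the corollary corresponds to the unit of time in equation~(\ref{eqpipespeedup}), and so that ``cost of tracking the paths to solve the start system'' corresponds to $Fn$ with the same $F$. Once this identification is made, there is no further computation beyond the limit already carried out in the proof of Theorem~\ref{theopipe}.
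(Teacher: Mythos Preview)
Your reduction to Theorem~\ref{theopipe} is correct for the first stage but misses a second stage that the paper's proof explicitly addresses. Solving the top dimensional system is \emph{not} the same as solving the random coefficient start system $\bfg(\x)=\zero$: the pipeline of Figure~\ref{fig2stagepipe} outputs $\bfg^{-1}(\zero)$, and those points then serve as start solutions for a \emph{further} continuation homotopy from $\bfg$ to the actual embedded top dimensional system. Your argument treats the pipeline as if it delivers the solutions of the top dimensional system directly, so the timing analysis of equation~(\ref{eqpipespeedup}) alone does not cover the full task in the corollary.

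The paper's proof handles this by decomposing into two stages: (i) the pipelined solving of the random coefficient system, where Theorem~\ref{theopipe} gives $S_p\to p$; and (ii) the subsequent continuation to the top dimensional system, which is a pleasingly parallel path-tracking step and therefore also achieves near-optimal speedup for large problems. The overall speedup then converges to~$p$ because both stages do. Your proposal needs only this additional paragraph acknowledging stage~(ii) and its pleasingly parallel nature; the matching of ``pipeline latency $=F$'' with ``$F=p-1$'' and the invocation of Theorem~\ref{theopipe} for stage~(i) are fine as written.
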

\begin{proof}
Solving the top dimensional system consists in two stages.
The first stage, solving a random coefficient system,
is covered by Theorem~\ref{theopipe}.
In the second stage, the solutions of the random coefficient system are
the start solutions in a homotopy to solve the top dimensional system.
This second stage is a pleasingly parallel computation as the paths
can be tracked independently from each and for which the speedup is
close to optimal for sufficiently large problems.~\qed
\end{proof}

\section{Computing Lower Dimensional Solution Sets}

The solution of the top dimensional system is an important first stage,
which leads to the top dimensional solution set,
provided the given dimension on input equals the top dimension.
This section describes the second stage in a numerical irreducible
decomposition: the computation of candidate generic points
on the lower dimensional solution sets.

\subsection{Cascades of Homotopies}

The solutions of an embedded system with nonzero slack variables
are regular solutions and serve as start solutions to compute
sufficiently many generic points on the lower dimensional solution sets.
The sufficiently many in the sentence above means that there will
be at least as many generic points as the degrees of the lower dimensional
solution sets.

\begin{example} \label{exmultistep}
(a system with a 3-stage cascade of homotopies) {\rm
Consider the following system:
\begin{equation} \label{eqmultistep}
   \bff(\x)
   = \left\{
      \begin{array}{l}
         (x_1-1)(x_1-2)(x_1-3)(x_1-4) = 0 \\
         (x_1-1)(x_2-1)(x_2-2)(x_2-3) = 0 \\
         (x_1-1)(x_1-2)(x_3-1)(x_3-2) = 0 \\
         (x_1-1)(x_2-1)(x_3-1)(x_4-1) = 0.
      \end{array}
   \right.
\end{equation}
In its factored form, the numerical irreducible decomposition
is apparent.  First, there is the three dimensional solution set
defined by $x_1 = 1$.  
Second, for $x_1 = 2$, observe that 
$x_2 = 1$ defines a two dimensional solution set
and four lines: $(2, 2, x_3, 1)$, $(2, 2, 1, x_4)$,
$(2, 3, 1, x_4)$, and $(2, 3, x_3, 1)$.
Third, for $x_1 = 3$, there are four lines:
$(3, 1, 1, x_4)$, $(3, 1, 2, x_4)$, $(3, 2, 1, x_4)$, $(3, 3, 1, x_4)$,
and two isolated points $(3, 2, 2, 1)$ and $(3, 3, 2, 1)$.
Fourth, for $x_1 = 4$, there are four lines:
$(4, 1, 1, x_4)$, $(4, 1, 2, x_4)$, $(4, 2, 1, x_4)$,
$(4, 3, 1, x_4)$, and two additional isolated solutions
$(4, 3, 2, 1)$ and $(4, 2, 2, 1)$.

Sorted then by dimension, there is one three dimensional solution set,
one two dimensional solution set, twelve lines,
and four isolated solutions.

The top dimensional system has three random linear equations
and three slack variables $z_1$, $z_2$, and $z_3$.
The mixed volume of the top dimensional system equals 61
and this is the number of paths tracked in its solution.
Of those 61 paths, 6 diverge to infinity and the cascade
of homotopies starts with 55 paths.
The number of paths tracked in the cascade is summarized
at the right in Figure~\ref{figcascade}.

\begin{figure}[hbt]
\begin{center}
\begin{picture}(156,156)(0,0)
\put(3,147){\framebox{55}}
\put(150,147){\Ovalbox{55}} \put(159,144){\vector(0,-1){30}}
\put(19.5,150){\vector(1,0){20}}  
\put(40,120){\line(0,1){36}}
\put(86,120){\line(0,1){36}}
\put(40,156){\line(1,0){46}}
\put(40,144){\line(1,0){46}}  \put(43,147){6 at $\infty$}
\put(40,132){\line(1,0){46}}  \put(43,135){1 $z_3 = 0$}
\put(86,138){\vector(1,0){20}}
\put(106,135){\framebox{1}}
\put(40,120){\line(1,0){46}}  \put(43,123){54 $z_3 \not= 0$}
\put(11,125){\line(1,0){29}}
\put(11,125){\vector(0,-1){12}}
\put(3,103){\framebox{54}}
\put(150,103){\Ovalbox{54}} \put(159,100){\vector(0,-1){30}}
\put(19.5,106){\vector(1,0){20}}
\put(86,76){\line(0,1){36}}
\put(40,76){\line(0,1){36}}
\put(40,112){\line(1,0){46}}
\put(40,100){\line(1,0){46}} \put(43,103){2 at $\infty$}
\put(40,88){\line(1,0){46}}  \put(43,91){2 $z_2 = 0$}
\put(86,94){\vector(1,0){20}}
\put(106,91){\framebox{2}}
\put(40,76){\line(1,0){46}}  \put(43,79){50 $z_2 \not= 0$}
\put(11,81){\line(1,0){29}}
\put(11,81){\vector(0,-1){12}}
\put(3,59){\framebox{50}}    
\put(150,59){\Ovalbox{54}}   \put(159,56){\vector(0,-1){30}}
\put(19.5,62){\vector(1,0){20}}
\put(86,32){\line(0,1){36}}
\put(40,32){\line(0,1){36}}
\put(40,68){\line(1,0){46}}
\put(40,56){\line(1,0){46}}  \put(43,59){6 at $\infty$}
\put(40,44){\line(1,0){46}}  \put(43,47){18 $z_1 = 0$}
\put(86,50){\vector(1,0){20}}
\put(106,47){\framebox{18}}
\put(40,32){\line(1,0){46}}  \put(43,35){26 $z_1 \not= 0$}
\put(11,37){\line(1,0){29}} 
\put(11,37){\vector(0,-1){12}}
\put(3,15){\framebox{26}}    
\put(150,15){\Ovalbox{26}} 
\put(19.5,18){\vector(1,0){20}}
\put(40,12){\line(1,0){46}}  \put(43,15){18 at $\infty$}
\put(40,0){\line(1,0){46}}   \put(43,3){8 finite}
\put(40,24){\line(1,0){46}}
\put(40,0){\line(0,1){24}}
\put(86,0){\line(0,1){24}}
\put(86,6){\vector(1,0){20}}
\put(106,3){\framebox{8}}
\end{picture}
\caption{At the left are the numbers of paths tracked
in each stage of the computation of a numerical irreducible 
decomposition of $\bff(\x) = \zero$ in~(\ref{eqmultistep}).
The numbers at the right are the {\em candidate} generic points on
each positive dimensional solution set,
or in case of the rightmost 8 at the bottom,
the number of {\em candidate} isolated solutions.
Shown at the farthest right is the summary of the number of paths
tracked in each stage of the cascade.}
\label{figcascade}
\end{center}
\end{figure}
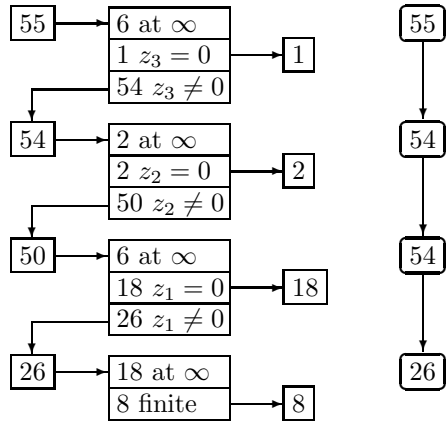

The number of solutions with nonzero slack variables remains
constant in each run, because those solutions are regular.
Except for the top dimensional system, the number of solutions
with slack variables equal to zero fluctuates,
each time different random constants are generated in the embedding,
because such solutions are highly singular. }
\end{example}

The right of Figure~\ref{figcascade} shows the order of computation 
of the path tracking jobs, in four stages, for each dimension of
the solution set.  The obvious parallel implementation is to have
$p$ threads collaborate to track all paths in that stage.

\subsection{Speedup}

The following analysis assumes that every path has the same
difficulty and requires the same amount of time to track.
\begin{theorem} \label{theopathspeedup}
Let $T_p$ be the time it takes to track $n$ paths with $p$ threads.
Then, the optimal speedup $S_p$ is
\begin{equation}
   S_p = p - \frac{p-r}{T_p}, \quad r = n \mbox{\rm ~mod~} p.
\end{equation}
If $n < p$, then $S_p = n$.
\end{theorem}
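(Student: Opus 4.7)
The plan is to compute $T_p$ directly from the assumption of equal path difficulty and then simplify the ratio $S_p = T_1/T_p$ via Euclidean division of $n$ by $p$. Without loss of generality I will normalize so each path costs one time unit, so that $T_1 = n$.

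The first step is to argue that under the work crew model with a shared queue of $n$ equal-cost jobs, the optimal parallel runtime is $T_p = \lceil n/p \rceil$. Every thread repeatedly claims the next job until the queue is empty, so each thread finishes either $\lceil n/p \rceil$ or $\lfloor n/p \rfloor$ paths; the makespan is exactly $\lceil n/p \rceil$. This also matches the trivial lower bound $n/p$ rounded up, since no path can be preempted and the total work is $n$.

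Next I would write $n = q p + r$ with $0 \le r < p$ and split into two regimes. For $n < p$ we have $q = 0$ and $T_p = 1$, hence $S_p = n$, matching the final sentence of the statement. For $n \ge p$ with $r > 0$ we have $T_p = q+1$, and the identity $n = (q+1)p - (p-r)$ yields
\begin{equation}
S_p \;=\; \frac{n}{T_p} \;=\; \frac{(q+1)p - (p-r)}{q+1} \;=\; p - \frac{p-r}{T_p},
\end{equation}
which is exactly the claimed formula. The boundary case $p \mid n$ requires a small choice of convention: either take $r \in \{1,\dots,p\}$ (so $r = p$ when $p \mid n$ and the correction term vanishes), or treat it separately and note $T_p = n/p$ and $S_p = p$ directly.

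The main obstacle, such as it is, is only notational, namely pinning down the convention for $r = n \bmod p$ so that the single displayed formula uniformly covers both the divisible and the non-divisible case. Beyond that, the proof reduces to a one-line application of the division algorithm once the makespan $T_p = \lceil n/p \rceil$ has been justified.
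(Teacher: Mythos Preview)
Your proposal is correct and follows essentially the same route as the paper: normalize each path to one time unit, write $n = qp + r$ via Euclidean division, observe $T_p = q+1$, and simplify $T_1/T_p$ to $p - (p-r)/T_p$. The only differences are cosmetic---you justify $T_p = \lceil n/p \rceil$ via the work-crew makespan argument (the paper just asserts $T_p = q+1$) and you are more careful than the paper about the boundary case $p \mid n$, which the paper's phrasing ``the tracking of $r$ paths \ldots\ takes one time unit'' silently glosses over when $r=0$.
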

\begin{proof}
Assume it takes one time unit to track one path.
The time on one thread is then $T_1 = n = q p + r$,
$q = \lfloor n/p \rfloor$ and $r = n \mbox{\rm ~mod~} p$.
As $r < p$, the tracking of $r$ paths with $p$ threads 
takes one time unit, so $T_p = q + 1$.  Then the speedup is
\begin{equation}
   S_p = \frac{T_1}{T_p}
       = \frac{q p + r}{q + 1}
       = \frac{q p + p - p + r}{q + 1}
       = \frac{q p + p}{q + 1} - \frac{p-r}{q+1}
       = p - \frac{p-r}{T_p}.
\end{equation}
If $n < p$, then $q=0$ and $r = n$, which leads to~$S_p = n$. \hfill \qed
\end{proof}

In the limit, as $n \rightarrow \infty$, also $T_p \rightarrow \infty$,
then $(p-r)/T_p \rightarrow 0$ and so $S_p \rightarrow p$.
For a cascade with $D+1$ stages, Theorem~\ref{theopathspeedup} can
be generalized as follows.

\begin{corollary} \label{corpathspeedup}
Let $T_p$ be the time it takes to track with $p$ threads
a sequence of $n_0$, $n_1$, $\ldots$, $n_D$ paths.
Then, the optimal speedup $S_p$ is
\begin{equation} \label{eqpathspeedup}
   S_p = p - \frac{dp-r_0 - r_1 - \cdots - r_D}{T_p}, 
   \quad r_k = n_k \mbox{\rm ~mod~} p, k = 0, 1, \ldots D.
\end{equation}
\end{corollary}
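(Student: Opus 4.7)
The plan is to reduce to Theorem~\ref{theopathspeedup} stage by stage and then recombine by linearity. For each stage $k \in \{0,1,\ldots,D\}$, write the Euclidean division $n_k = q_k p + r_k$ with $q_k = \lfloor n_k/p \rfloor$ and $r_k = n_k \bmod p$. By the argument in the proof of Theorem~\ref{theopathspeedup}, the time to track the $n_k$ paths of stage $k$ with $p$ threads is $T_p^{(k)} = q_k + 1$ (assuming $r_k > 0$; the case $r_k = 0$ is a trivial edge case where the stage needs only $q_k$ time units and can be absorbed by adjusting the counting of stages). Summing over the cascade gives
\begin{equation}
   T_p = \sum_{k=0}^{D} T_p^{(k)} = \sum_{k=0}^{D} q_k + (D+1),
\end{equation}
where I read the constant $d$ in the corollary statement as $d = D+1$, the total number of stages in the cascade.

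Next I would compute the sequential time by the same linearity:
\begin{equation}
   T_1 = \sum_{k=0}^{D} n_k = p \sum_{k=0}^{D} q_k + \sum_{k=0}^{D} r_k.
\end{equation}
The algebraic core is then to rewrite $T_1$ in terms of $T_p$. Multiplying the expression for $T_p$ by $p$ yields $p\,T_p = p \sum_k q_k + p(D+1)$, so that
\begin{equation}
   T_1 = p\,T_p - \bigl( p(D+1) - (r_0 + r_1 + \cdots + r_D) \bigr).
\end{equation}
Dividing by $T_p$ produces exactly~(\ref{eqpathspeedup}) with $d = D+1$.

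The only real obstacle is the bookkeeping of the ceiling effect in each stage: when $r_k = 0$ the contribution to $T_p^{(k)}$ is $q_k$ rather than $q_k + 1$, so the ``$+1$'' per stage is really a $\lceil \cdot \rceil$-induced overhead that is at most one time unit. The clean statement in the corollary assumes that each stage incurs this unit of latency; under this assumption the proof is simply the per-stage application of Theorem~\ref{theopathspeedup} followed by the algebraic identity above. As in the single-stage case, the asymptotic consequence is immediate: when $n_0, n_1, \ldots, n_D \rightarrow \infty$ with $p$ fixed, $T_p \rightarrow \infty$ while the numerator $dp - \sum_k r_k$ stays bounded, so $S_p \rightarrow p$.
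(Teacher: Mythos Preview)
Your proof is correct and follows essentially the same route as the paper: per-stage Euclidean division $n_k = q_k p + r_k$, summing to obtain $T_1 = p\sum_k q_k + \sum_k r_k$ and $T_p = \sum_k q_k + (D+1)$, then the algebraic identity $T_1 = p\,T_p - \bigl((D+1)p - \sum_k r_k\bigr)$ divided by $T_p$. You even make explicit two points the paper leaves implicit: that the undefined constant $d$ in the statement must be read as $D+1$, and that the ``$+1$'' per stage is really a ceiling effect that disappears when $r_k = 0$.
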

\begin{proof}
Assume it takes one time unit to track one path.
The time on one thread is then 
\begin{equation}
   T_1 = n_0 + n_1 + \cdots + n_D
       = q_0 p + r_0 + q_1 p + r_1 + \cdots + q_D p + r_D,
\end{equation}
where $q_k = \lfloor n_k/p \rfloor$ and $r_k = n_k \mbox{\rm ~mod~} p$,
for $k=0,1,\ldots,D$.
As $r_k < p$, the tracking of $r_k$ paths with $p$ threads 
takes $D+1$ time units, so the time on $p$ threads is
\begin{equation}
   T_p = q_0 + q_1 + \cdots + q_D + D+1.
\end{equation}
Then the speedup is
\begin{eqnarray}
   S_p = \frac{T_1}{T_p}
       & = & \frac{p T_p - dp + r_0 + r_1 + \cdots + r_D}{T_p} \\
       & = & p - \frac{dp - r_0 - r_1 - \cdots - r_D}{T_p}. \quad \qed
\end{eqnarray}
\end{proof}

If the length $D+1$ of the sequence of paths is long
and the number of paths in each stage is less than $p$,
then the speedup will be limited.

\section{Filtering Lower Dimensional Solution Sets}

Even if one is interested only in the isolated solutions
of a polynomial system, one would need to be able to distinguish the
isolated multiple solutions from solutions on a positive dimensional
solution set.  Without additional information, both an isolated
multiple solution and a solution on a positive dimensional set
appear numerically as singular solutions, that is: as solutions
where the Jacobian matrix does not have full rank.
A homotopy membership test makes this distinction.

\subsection{Homotopy Membership Tests}

\begin{example} \label{exmembtest}
(homotopy membership test) {\rm
Consider the following system:
\begin{equation} \label{eqmbthomsys}
   \bff(\x) =
   \left\{
      \begin{array}{rcl}
         (x_1 - 1)(x_1 - 2) & = &  0 \\
            (x_1 - 1) x_2^2 & = & 0. \\
      \end{array}
   \right.
\end{equation}
The solution consists of the line $x_1 = 1$
and the isolated point $(2, 0)$ which occurs
with multiplicity two.
The line $x_1 = 1$ is represented by one generic point
as the solution of the embedded system
\begin{equation} \label{eqmbthomemb}
   E(\bff(\x),z_1) =
   \left\{
      \begin{array}{rcl}
         (x_1 - 1)(x_1 - 2) + \gamma_1 z_1 & = & 0 \\
            (x_1 - 1) x_2^2 + \gamma_2 z_1 & = & 0 \\
             c_0 + c_1 x_1 + c_2 x_2 + z_1 & = & 0,
      \end{array}
   \right.
\end{equation}
where the constants $\gamma_1$, $\gamma_2$,
$c_0$, $c_1$, and $c_2$ are randomly generated complex numbers.
Replacing the constant $c_0$ by $c_3 = - 2 c_1$ makes that
the point $(2, 0, 0)$ satisfies the system $E(\bff(\x),z_1) = \zero$.
Consider the homotopy 
\begin{equation} \label{eqmbthomotopy}
   \bfh(\x,z_1,t) = 
   \left\{
      \begin{array}{rcl}
         (x_1 - 1)(x_1 - 2) + \gamma_1 z_1 & = & 0 \\
            (x_1 - 1) x_2^2 + \gamma_2 z_1 & = & 0 \\
         (1-t) c_0 + tc_3 + c_1 x_1 + c_2 x_2 + z_1 & = & 0.
      \end{array}
   \right.
\end{equation}
For $t=0$, there is the generic point on the line $x_1 = 1$
as a solution of the system~(\ref{eqmbthomemb}).
Tracking one path starting at the generic point to $t=1$
moves the generic point to another generic point on $x_1 = 1$.
If that other generic point at $t=1$ coincides with the
point $(2,0,0)$, then the point $(2,0)$ belongs to the line.
Otherwise, as is the case in this example, it does not.  }
\end{example}

In running the homotopy membership test,
a number of paths need to be tracked.
To identify the bottlenecks in a parallel version,
consider the output of Figure~\ref{figcascade}
in the continuation of the example on the system in~\ref{eqmultistep}.

\begin{example}  \label{exmultistep2}
(Example~\ref{exmultistep} continued) {\rm
Assume the spurious points on the higher dimensional solution sets
have already been removed so there is one generic point on
the three dimensional solution set, one generic point on 
the two dimensional solution set, and twelve generic points on
the one dimensional solution set.

At the end of the cascade, there are eight candidate isolated solutions.
Four of those eight are regular solutions and are thus isolated.
The other four solutions are singular.
Singular solutions may be isolated multiple solutions, 
but could also belong to the higher dimensional solution sets.
Consider Figure~\ref{figisolatedfilter}.

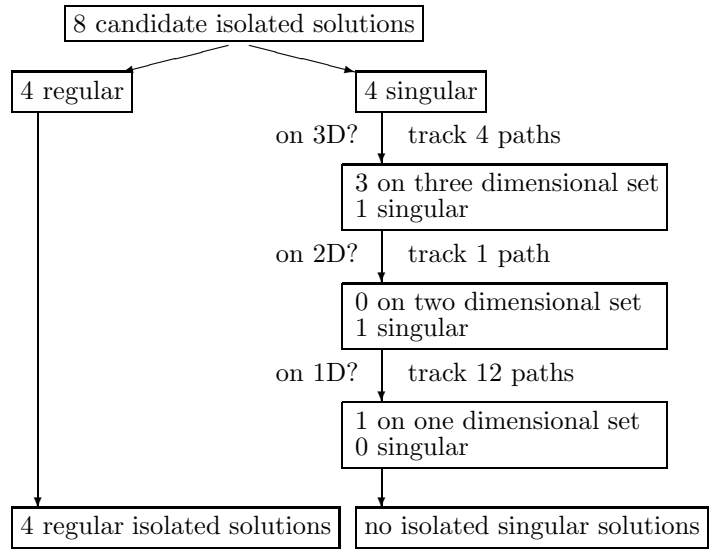
\begin{figure}[hbt]
\begin{center}
\begin{picture}(250,200)(0,0)
\put(20,190){\framebox{8 candidate isolated solutions}}
\put(82,185){\vector(-4,-1){40}}
\put(0,165){\framebox{4 regular}}
\put(10,160){\vector(0,-1){150}}
\put(0,0){\framebox{4 regular isolated solutions}}
\put(90,185){\vector(+4,-1){40}}
\put(130,165){\framebox{4 singular}}
\put(140,160){\vector(0,-1){20}}
\put(100,148){on 3D?}
\put(150,148){track 4 paths}
\put(125,140){\line(1,0){123}} \put(125,140){\line(0,-1){25}}
\put(125,115){\line(1,0){123}} \put(248,115){\line(0,1){25}}
\put(130,130){3 on three dimensional set}
\put(130,120){1 singular}
\put(140,115){\vector(0,-1){20}}
\put(100,103){on 2D?}
\put(150,103){track 1 path}
\put(125,95){\line(1,0){123}} \put(125,95){\line(0,-1){25}}
\put(125,70){\line(1,0){123}} \put(248,70){\line(0,1){25}}
\put(130,85){0 on two dimensional set}
\put(130,75){1 singular}
\put(140,70){\vector(0,-1){20}}
\put(100,58){on 1D?}
\put(150,58){track 12 paths}
\put(125,50){\line(1,0){123}} \put(125,50){\line(0,-1){25}}
\put(125,25){\line(1,0){123}} \put(248,25){\line(0,1){25}}
\put(130,40){1 on one dimensional set}
\put(130,30){0 singular}
\put(140,25){\vector(0,-1){15}}
\put(130,0){\framebox{no isolated singular solutions}}
\end{picture}
\caption{Stages in testing whether the singular candidate isolated points
belong to the higher dimensional solution sets.}
\label{figisolatedfilter}
\end{center}
\end{figure}

\noindent Executing the homotopy membership tests as in
in Figure~\ref{figisolatedfilter}, 
first on 3D, then on 2D, and finally on 1D,
the bottleneck occurs in the middle,
where there is only one path to track.  }
\end{example}

Figure~\ref{figcascadefilter} is the continuation
of Figure~\ref{figcascade}: the output of the cascade shown
in Figure~\ref{figcascade} is the input of the filtering
in Figure~\ref{figcascadefilter}.
Figure~\ref{figisolatedfilter} explains the last stage
in Figure~\ref{figcascadefilter}.

\begin{figure}[hbt]
\begin{center}
\begin{picture}(300,100)(84,0)
\put(84,90){\vector(0,-1){96}}
\put(84,78){\vector(1,0){20}}
\put(104,75){\framebox{1}}
\put(260,66){\Ovalbox{1}} \put(266,62){\vector(0,-1){15}}
\put(84,54){\vector(1,0){20}}
\put(104,51){\framebox{2}}
\put(116,54){\vector(1,0){20}}
\put(136,51){\framebox{1}}
\put(260,37){\Ovalbox{6}} \put(272,40){\vector(1,0){20}}
\put(292,37){\Ovalbox{3}} \put(298,33){\vector(0,-1){15}}
\put(84,30){\vector(1,0){20}}
\put(104,27){\framebox{18}}
\put(120,30){\vector(1,0){20}}
\put(140,27){\framebox{15}}
\put(156,30){\vector(1,0){20}}
\put(176,27){\framebox{12}}
\put(292,8){\Ovalbox{4}} \put(304,11){\vector(1,0){20}}
\put(324,8){\Ovalbox{1}} \put(336,11){\vector(1,0){20}}
\put(356,8){\Ovalbox{12}}
\put(84,6){\vector(1,0){20}}
\put(104,3){\framebox{8}}
\put(116,6){\vector(1,0){20}}
\put(136,3){\framebox{5}}
\put(148,6){\vector(1,0){20}}
\put(168,3){\framebox{5}}
\put(180,6){\vector(1,0){20}}
\put(200,3){\framebox{4}}
\end{picture}
\caption{On input are the candidate generic points 
shown as output in Figure~\ref{figcascade}:
1 point at dimension three,
2 points at dimension two,
18 points at dimension one, and 
8 candidate isolated points.
Points on higher dimensional solution sets are removed
by homotopy membership filters.
The numbers at the right equal the number of paths
in each stage of the filters.
The sequence 4, 1, 12 at the bottom is explained
in Figure~\ref{figisolatedfilter}.  }
\label{figcascadefilter}
\end{center}
\end{figure}
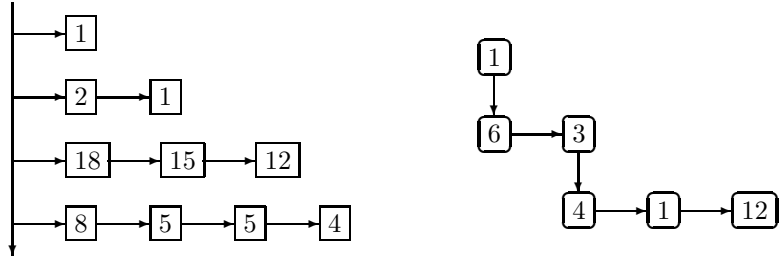

\subsection{Speedup}

The analysis of the speedup is another consequence
of Theorem~\ref{theopathspeedup}.

\begin{corollary} \label{corfilterspeedup}
Let $T_p$ be the time it takes to filter
$n_D$, $n_{D-1}$, $\ldots$, $n_{\ell+1}$ singular points
on components respectively of dimensions $D$, $D-1$, $\ldots$, $\ell+1$
and degrees $d_D$, $d_{D-1}$, $\ldots$, $d_{\ell+1}$.
Then, the optimal speedup is
\begin{equation} \label{eqfilterspeedup}
   S_p = p - \frac{(D-\ell)p - r_D - r_{D-1} - \cdots - r_{\ell+1}}{T_p},
   \quad r_k = (n_k d_k) \mbox{\rm ~mod~} p,
\end{equation}
for $k=\ell+1,\ldots,D-1, D$.
\end{corollary}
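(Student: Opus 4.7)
The plan is to reduce the corollary to Corollary~\ref{corpathspeedup} by identifying the correct per-stage path count. The essential observation is that a homotopy membership test, as defined in Example~\ref{exmembtest}, asks whether a candidate point coincides with \emph{some} generic point reached by continuation along a component. If the component has dimension $k$ and degree $d_k$, then it is represented by $d_k$ generic points; to test a single candidate against this component, one must track all $d_k$ paths obtained by sliding the slicing hyperplane so that it passes through the candidate. Hence filtering $n_k$ singular candidates against the dimension-$k$ component contributes exactly $n_k d_k$ independent path-tracking jobs. This cost model is consistent with Figure~\ref{figcascadefilter} and the $4,1,12$ breakdown in Figure~\ref{figisolatedfilter}, where the factor $d_k$ is visible.

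Once the per-stage path count $n_k d_k$ is established, the filtering job decomposes into a sequence of pleasingly parallel stages: first all $n_D d_D$ paths for the top dimension, then $n_{D-1} d_{D-1}$ paths for the next dimension, and so on down to $n_{\ell+1} d_{\ell+1}$ for dimension $\ell+1$. The number of stages is therefore $D-\ell$, and the stages are synchronized in the sense that the work crew cannot start testing against dimension $k-1$ until the dimension-$k$ membership results are known.

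The next step is a direct application of Corollary~\ref{corpathspeedup}, after a simple reindexing. Relabel the sequence so that stage $j$ corresponds to dimension $k = D-j$, with $j = 0, 1, \ldots, D-\ell-1$, and set the $j$-th path count to $n_{D-j} d_{D-j}$. Corollary~\ref{corpathspeedup} then yields
\begin{equation}
   S_p = p - \frac{(D-\ell)p - r_D - r_{D-1} - \cdots - r_{\ell+1}}{T_p},
\end{equation}
with $r_k = (n_k d_k) \bmod p$ in place of $n_k \bmod p$, since the work in stage $k$ is $n_k d_k$ unit-time paths rather than $n_k$. The number of stages $D-\ell$ plays the role previously played by $D+1$ in Corollary~\ref{corpathspeedup}, accounting for the coefficient of $p$ in the numerator.

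The only step that is not a pure mechanical substitution is the justification of the $n_k d_k$ count, i.e.\ the claim that a single membership test against a degree-$d_k$ component costs $d_k$ paths. This is where I would be most careful: it relies on the fact that the generic points on a pure-dimensional component of degree $d_k$ form one continuation orbit under monodromy over the space of random slices, so one must revisit all $d_k$ witness points to confirm or rule out membership. With that observation in hand, the corollary follows immediately from Corollary~\ref{corpathspeedup}.
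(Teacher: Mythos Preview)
Your proposal is correct and follows essentially the same approach as the paper: observe that filtering $n_k$ candidates against a degree-$d_k$ component costs $n_k d_k$ paths, then substitute $n_k d_k$ for $n_k$ in Corollary~\ref{corpathspeedup}. You supply more detail than the paper does---in particular the justification of the $d_k$ paths per membership test and the explicit reindexing of the $D-\ell$ stages---but the argument is the same in substance.
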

\begin{proof}
For a component of degree $d_k$,
it takes $n_k d_k$ paths to filter $n_k$ singular points.
The statement in~(\ref{eqfilterspeedup})
follows from replacing $n_k$ by $n_k d_k$
in the statement in~(\ref{eqpathspeedup}) 
of Corollary~\ref{corpathspeedup}. \hfill \qed
\end{proof}

Although the example shown in Figure~\ref{figcascadefilter}
is too small for parallel computation, it illustrates the
law of diminishing returns in introducing parallelisms.
There are two reasons for a reduced parallelism:
\begin{enumerate}
\item The number of singular solutions and the degrees of the
      solution sets could be smaller than the number of available cores.
\item In a cascade of homotopies, there are as many steps as $D+1$,
      where $D$ is the expected top dimension.
      To filter the output of the cascade, there are $D(D+1)/2$ stages,
      so longer sequences of homotopies are considered.
\end{enumerate}

Singular solutions that do not lie on any higher positive dimensional
solution set need to be processed further by 
deflation~\cite{LVZ06,LVZ07}, not available yet in
a multithreaded implementation.
Parallel algorithms to factor the positive dimensional solutions
into irreducible factors are described in~\cite{LV09}.

\section{Computational Experiments}



The software was developed on a Mac OS X laptop
and Linux workstations.
The executable for Windows also supports multithreading.
All times reported below are on a CentOS Linux 7 computer
with two Intel Xeon E5-2699v4 Broadwell-EP 2.20 GHz processors,
which each have 22 cores, 256 KB L2 cache and 55 MB L3 cache.
The memory is 256 MB, in 8 banks of 32 MB at 2400 MHz.
As the processors support hyperthreading, speedups of more than~44
are possible.

On Linux, the executable {\tt phc} is compiled with
the GNAT GPL 2016 edition of the gnu-ada compiler.
The thread model is posix, in gcc version~4.9.4.
The code in PHCpack
contains an Ada translation of the MixedVol Algorithm~\cite{GLW05},
The source code for the software is at github,
licensed under GNU GPL version 3.
The blackbox solver for a numerical irreducible decomposition
is called as {\tt phc -B} and with {\tt p} threads: as~{\tt phc -B -tp}.
With {\tt phc -B2} and {\tt phc -B4}, computations happen 
respectively in double double and quad double arithmetic~\cite{HLB00}.

\subsection{Solving Cyclic 8 and Cyclic 9-Roots} 

Both cyclic 8 and cyclic 9-roots are relatively small problems,
relative compared to the cyclic 12-roots problem.
Table~\ref{tabcyc89times} summarizes wall clock times and speedups 
for runs on the cyclic 8 and 9-roots systems.
The wall clock time is the real time, elapsed since the start
and the end of each run.  This includes the CPU time, system time,
and is also influenced by other jobs the operating system is running.

\begin{table}[hbt]
\begin{center}
\begin{tabular}{r||r|r||r|r}
    & \multicolumn{2}{c||}{cyclic 8-roots}
    & \multicolumn{2}{c}{cyclic 9-roots} \\
 p~ & ~seconds~ &  ~speedup~ &   seconds~  &  ~speedup~ \\ \hline \hline
 1~ & ~181.765~ &   1.00~~ &   ~2598.435~ &    1.00~~  \\
 2~ & 167.871~ &   1.08~~ &   1779.939~ &    1.46~~  \\
 4~ &  89.713~ &   2.03~~ &    901.424~ &    2.88~~  \\
 8~ &  47.644~ &   3.82~~ &    427.800~ &    6.07~~  \\
16~ &  32.215~ &   5.65~~ &    267.838~ &    9.70~~  \\
32~ &  22.182~ &   8.19~~ &    153.353~ &   16.94~~  \\
64~ &  20.103~ &   9.04~~ &    150.734~ &   17.24~~  \\
\end{tabular}
\caption{Wall clock times in seconds with {\tt phc -B -tp} for
{\tt p} threads.}
\label{tabcyc89times}
\end{center}
\end{table}

With 64 threads the time for cyclic 8-roots reduces
from 3 minutes to 20 seconds and for cyclic 9-roots
from 43 minutes to 2 minutes and 30 seconds.
Table~\ref{tabcyc89ddqdtimes} summarizes the wall clock times
with 64 threads in higher precision.

\begin{table}[hbt]
\begin{center}
\begin{tabular}{r||rcr||rcr}
    & \multicolumn{3}{c||}{cyclic 8-roots}
    & \multicolumn{3}{c}{cyclic 9-roots} \\
    & ~seconds & = & hms format~ & ~seconds  & = & hms format 
\\ \hline \hline
 dd~ &  53.042 & = &    53s~ &   498.805 & = &    8m19s~  \\
 qd~ & 916.020 & = & 15m16s~ & ~4761.258 & = & 1h19m21s~  \\
\end{tabular}
\caption{Wall clock times with 64 threads 
in double double and quad double precision.}
\label{tabcyc89ddqdtimes}
\end{center}
\end{table}

\subsection{Solving Cyclic 12-Roots on One Thread}

The classical B\'{e}zout bound for the system is 479,001,600.
This is lowered to 342,875,319 with the application of
a linear-product start system.  In contrast, the mixed volume
of the embedded cyclic 12-roots system equals 983,952.

The wall clock time on the blackbox solver on one thread is about
95 hours (almost 4 days).
This run includes the computation of the linear-product bound
which takes about 3 hours.  
This computation is excluded
in the parallel version because the multithreaded version overlaps
the mixed volume computation with polyhedral homotopies.
While a speedup of about 30 is not optimal, the time reduces
from 4 days to less than 3 hours with 64 threads,
see Table~\ref{tabpipeline}.

The blackbox solver does not exploit symmetry,
see~\cite{AV13} for such exploitation.

\subsection{Pipelined Polyhedral Homotopies}

This section concerns the computation of a random coefficient
start system used in a homotopy to solve the top dimensional system,
to start the cascade homotopies for the cyclic 12-roots system.
Table~\ref{tabpipeline} summarizes the wall clock times to
solve a random coefficient start system to solve the top dimensional system.

\begin{table}[hbt]
{\small
\hspace{-5mm}~\begin{tabular}{r|rr|r|rr|c}
{\tt p}~ &  seconds  & hms format~ & ~speedup~
& ~total seconds  & hms format &  percentage \\
\hline \hline
 2~ & 62812.764 & 17h26m52s &    1.00~~
&    157517.816  & 43h45m18s &   39.88\% \\
 4~ & 21181.058  &  5h53m01s  &    2.97~~
&     73088.635  & 20h18m09s &   28.98\% \\
 8~ &  8932.512  &  2h28m53s  &    7.03~~
&     38384.005  & 10h39m44s &   23.27\% \\
16~ &  4656.478  &  1h17m36s  &   13.49~~
&     19657.329  &  5h27m37s &   23.69\% \\
32~ &  4200.362  &  1h10m01s  &   14.95~~
&     12154.088  &  3h22m34s &   34.56\% \\
64~ &  4422.220  &  1h13m42s  &   14.20~~
&      9808.424  &  2h43m28s &   45.08\%
\end{tabular}
}
\caption{Times of the pipelined polyhedral homotopies
versus the total time in the solver {\tt phc -B -tp},
for increasing values 2, 4, 8, 16, 32, 64 of the tasks {\tt p}.}
\label{tabpipeline}
\end{table}

For pipelining, we need at least 2 tasks: 
one to produce the mixed cells and another to track the paths.
The speedup of {\tt p} tasks is computed over 2 tasks.
With 16 threads, the time to solve a random coefficient system 
is reduced from 17.43 hours to 1.17 hour.
The second part of Table~\ref{tabpipeline} lists the time of solving 
the random coefficient system relative to the total time of the solver.
For 2 threads, solving the random coefficient system
takes almost 40\% of the total time and then decreases
to less than 24\% of the total time with 16 threads.
Already for 16 threads, the speedup of 13.49 indicates
that the production of mixed cells cannot keep up 
with the pace of tracking the paths.

Dynamic enumeration~\cite{MTK07} applies a greedy algorithm 
to compute all mixed cells and its implementation in DEMiCs~\cite{MT08}
produces the mixed cells at a faster pace than MixedVol~\cite{GLW05}.
Table~\ref{tabpipeline2} shows times for the mixed volume computation
with DEMiCs~\cite{MT08} in a pipelined version of the polyhedral homotopies.

\begin{table}[hbt]
\begin{center}
\begin{tabular}{r|rcr|r}
{\tt p}~ &  ~seconds  & = & hms format~ & ~speedup~ \\
\hline \hline
 2~ & ~56614 & = & 15h43m34s~ &     1.00~~ \\
 4~ &  21224 & = &  5h53m44s~  &    2.67~~ \\
 8~ &   9182 & = &  2h23m44s~  &    6.17~~ \\
16~ &   4627 & = &  1h17m07s~  &   12.24~~ \\
32~ &   2171 & = &    36m11s~  &   26.08~~ \\
64~ &   1989 & = &    33m09s~  &   28.46~~
\end{tabular}
\caption{Times of the pipelined polyhedral homotopies with DEMiCs,
for increasing values 2, 4, 8, 16, 32, 64 of tasks {\tt p}.
The last time is an average over 13 runs.  With 64 threads the
times ranged between 23 minutes and 47 minutes.}
\label{tabpipeline2}
\end{center}
\end{table}

\subsection{Solving the Cyclic 12-Roots System in Parallel}

As already shown in Table~\ref{tabpipeline},
the total time with 2 threads goes down from more than 43 hours
to less than 3 hours, with 64 threads.
Table~\ref{tabcyclic12times} provides a detailed breakup
of the wall clock times for each stage in the solver.

\begin{table}[hbt]
{\small
\begin{tabular}{r||r|r|r||r|r|r||r|r}
   & \multicolumn{3}{c||}{solving top system}
   & \multicolumn{3}{c||}{cascade and filter}
   & \multicolumn{1}{c|}{grand} \\
 p~ 
   & \multicolumn{1}{c|}{start}
   & \multicolumn{1}{c|}{contin}
   & \multicolumn{1}{c||}{total}
   & \multicolumn{1}{c|}{~cascade~}
   & \multicolumn{1}{c|}{filter}
   & \multicolumn{1}{c||}{total} 
   & \multicolumn{1}{c|}{total}
 & ~speedup \\ 
\hline \hline
 2~ &  62813  & 47667 & 110803 
    &  44383  & 2331 & 46714 & 157518 &  1.00~~ \\
 4~ &  21181  &  25105 &   46617 
    &  24913  & 1558 & 26471 &  73089 &  2.16~~ \\
 8~ &   8933  &  14632 &   23896 
    &  13542  &  946 & 14488 &  38384 &  4.10~~ \\
16~ &   4656  &   7178 &   12129 
    &   6853  &  676 &  7529 &  19657 &  8.01~~ \\
32~ &   4200  &   3663 &    8094
    &   3415  &  645 &  4060 &  12154 & 12.96~~ \\ 
64~ &   4422  &   2240 &    7003
    &   2228  &  557 &  2805 &   9808 & 16.06~~ \\
\end{tabular}
}
\caption{Wall clock times in seconds for all stages
of the solver on cyclic 12-roots.
The solving of the top dimension system breaks up in two stages:
the solving of a start system (start) and the continuation 
to the solutions of the the top dimensional system (contin).
Speedups are good in the cascade stage, but the filter stage
contains also the factorization in irreducible components,
which does not run in parallel. }
\label{tabcyclic12times}
\end{table}

A run in double double precision with 64 threads ends after
7 hours and 37 minutes.
This time lies between the times in double precision
with 8 threads, 10 hours and 39 minutes, and 
with 16 threads, 5 hours and 27 minutes (Table~\ref{tabpipeline}).
Confusing quality with precision, from 8 to 64 threads,
the working precision can be doubled with a reduction in time
by 3 hours, from 10.5 hours to 7.5 hours.



\bibliographystyle{plain}

\end{document}